\begin{document}

\title{The dynamics of the CBC Mode of Operation}
\author{Abdessalem Abidi$^1$, Christophe Guyeux$^2$, Bechara AL Bouna$^3$, Belgacem Bouall\`gue$^1$, and Mohsen Machhout$^1$\\
$^1$~Electronics and Microelectronics Laboratory,
Faculty of Sciences of Monastir, University of Monastir, Tunisia
\\ $^2$~FEMTO-ST Institute, UMR 6174 CNRS, DISC Computer Science Department \\
    University of Franche-Comt\'e,
    16, Route de Gray, 25000 Besan\c{c}on, France\\
$^3$~TICKET Lab., Antonine University, Hadat-Baabda, Lebanon\\
 Corresponding author:    abdessalemabidi9@gmail.com 
}

\date{Received: date / Accepted: date}

\maketitle

\begin{abstract}
In cryptography, the Cipher Block Chaining (CBC), one of the most commonly used mode in recent years, is a mode of operation that uses a block cipher to provide confidentiality or authenticity. In our previous research work, we have shown that this mode of operation exhibits,  under some conditions, a chaotic behaviour. We have studied this behaviour by evaluating both its level of sensibility and expansivity. In this paper, we intend to deepen the topological study of the CBC mode of operation and evaluate its property of topological mixing. Additionally, other quantitative evaluations are performed, and the level of topological entropy has been evaluated too.

\keywords{Cipher Block Chaining \and Mode of operation \and Block cipher \and Chaos\and Devaney's chaos \and Sensitivity \and Expansivity \and Topological mixing  \and Topological entropy}
\end{abstract}

\section{Introduction}
\label{intro}

Block ciphers, like the Data Encryption Standard (DES) or the Advanced Encryption Standard (AES), have a very simple principle. They do not treat the original text bit by bit but they manipulate blocks of text. In other words, the original text is broken into blocks of n bits. For each block, the encryption algorithm is applied to obtain an encrypted block that has the same size. After that, we put together all of these blocks, which are encrypted separately, to obtain the complete encrypted message. For decryption, we proceed in the same way, but this time, we start from the encrypted text to obtain the original one using now the decryption algorithm instead of the encryption function. So, it is not sufficient to put anyhow a block cipher algorithm in a program. We can, instead, use these algorithms in various ways according to their specific needs. These ways are called the block cipher modes of operation. There are several modes of operation and each one possesses its own characteristics and its specific security properties. In this article, we are only interested in one of these modes, namely the cipher block chaining (CBC) mode, and we will study its dynamical behavior of chaos.

The chaos theory that we consider in this article is the Devaney's topological one and its ramifications~\cite{devaney}. Being reputed as one of the best mathematical definition of chaos, this theory offers a framework with qualitative and quantitative tools to evaluate the notion of unpredictability~\cite{bahi2011efficient}. As an application of our fundamental results, we are interested in the area of information safety and security. Specifically, our contribution belongs to the field of the cipher block chaining modes of operation.

In~\cite{Abdessalem2016}, we have started to give mathematical proofs that emphasize the chaotic behavior of the CBC mode of operation. Thereafter, in~\cite{abidi2016quantitative}, we have stated that in addition to being chaotic as defined in Devaney's formulation, this mode is indeed largely sensible to initial errors or modification on either the Initialization Vector IV or the message to encrypt. Its expansivity has been regarded too, but this property is not satisfied as it has been established thanks to a counter example. In this new article,  we intend to deepen the topological study of this CBC mode of operation in order to obtain a complete mathematical overview of its dynamics. 
  
The remainder of this research work is organized as follows. In the next section, we will recall some basic definitions related to chaos and cipher block chaining mode of operation. Previously obtained results are recalled in Section \ref{section:previous results}. Sections \ref{topo} and \ref{sec:entro} contain the main contribution of this article: the first one evaluates the topological mixing of the CBC mode of operation, while the second one focuses on its topological entropy. This article ends with a conclusion section where our contribution is summarized and intended future work is presented.

\section{Basic recalls}
\label{section:BASIC RECALLS}
This section is devoted to basic definitions and terminologies in the field of topological chaos and in the one of block cipher mode of operation.

\subsection{Devaney's Chaotic Dynamical Systems}
\label{subsec:Devaney}
In the remainder of this article,

$m_n$ denotes the $n^{th}$ block message of a sequence $S$ while $m^j$ stands for the $j-th$ bit of integer of the block message $m\in \llbracket 0, 2^{\mathbb{N}}-1 \rrbracket$, expressed in the binary numeral system and $x_{i}$ stands for the $i^{th}$ component of a vector $x$. 

$\mathbb{X}^\mathbb{N}$ is the set of all sequences whose elements belong to $\mathbb{X}$.

$f^{\circ k}=f\circ ...\circ f$ is for the $k^{th}$ composition of a function $f$.
$\mathbb{N}$ is the set of natural (non-negative) numbers, while $\mathbb{N}^*$ stands for the positive integers $1, 2, 3, \hdots$ 

Finally, the following
notation is used: $\llbracket1;N\rrbracket=\{1,2,\hdots,N\}$.

Consider a topological space $(\mathcal{X},\tau)$, where $\tau$ represents a family of subsets of $\mathcal{X}$, and a continuous function $f :
\mathbb{X} \rightarrow \mathcal{X}$ on $(\mathbb{X},\tau)$.

\begin{definition}
The function $f$ is \emph{topologically transitive} if, for any pair of nonempty open sets
$\mathcal{U},\mathcal{V} \subset \mathcal{X}$, there exists an integer $k>0$ such that $f^{\circ k}(\mathbb{U}) \cap \mathbb{V} \neq
\varnothing$.
\end{definition}

\begin{definition}
An element $x$ is a \emph{periodic point} for $f$ of period $n\in \mathbb{N}$, $n>1$,
if $f^{\circ n}(x)=x$ and $f^{\circ k}(x) \neq x, 1\le k\le n$. 
\end{definition}

\begin{definition}
$f$ is  \emph{regular} on $(\mathbb{X}, \tau)$ if the set of periodic
points for $f$ is dense in $\mathbb{X}$: for any point $x$ in $\mathbb{X}$,
any neighborhood of $x$ contains at least one periodic point.
\end{definition}

\begin{definition}
\label{sensitivity} The function $f$ has \emph{sensitive dependence on initial conditions} on the metric space $(\mathcal{X},d)$
if there exists $\delta >0$ such that, for any $x\in \mathcal{X}$ and any
neighborhood $\mathcal{V}$ of $x$, there exist $y\in \mathcal{V}$ and $n > 0$ such that
the distance $d$ between the results of their $n^{th}$ composition, $f^{\circ n}(x)$ and $f^{\circ n}(y)$, is greater than $\delta$:
$$d\left(f^{\circ n}(x), f^{\circ n}(y)\right) >\delta .$$
$\delta$ is called the \emph{constant of sensitivity} of $f$.
\end{definition}

\begin{definition}[Devaney's formulation of chaos~\cite{devaney}]
\label{def:dev}
The function $f$ is  \emph{chaotic} on a metric space $(\mathcal{X},d)$ if $f$ is regular,
topologically transitive, and has sensitive dependence on initial conditions.
\end{definition}

Banks \emph{et al.} have proven in~\cite{Banks92} that when $f$ is regular and transitive on a metric space $(\mathcal{X}, d)$, then $f$ has the property of sensitive dependence on initial conditions. This is why chaos can be formulated too in a topological space $(\mathcal{X}, \tau)$: in that situation, chaos is obtained when $f$ is regular and
topologically transitive.
Note that the transitivity property is often obtained as a consequence of the strong transitivity one, which is defined below.

\begin{definition}
\label{def:strongTrans}
$f$ is \emph{strongly transitive} on $(\mathcal{X},d)$ if, for all point $x,y \in \mathbb{X}$ and for all neighborhood $\mathbb{V}$ of $x$, it exists $n \in \mathbb{N}$ and $x'\in \mathbb{V}$ such that $f^{\circ n}(x')=y$. 
\end{definition}

 \subsection{CBC properties}
 \label{sec:CBC properties}
 
  As what has been already defined, a mode of operation is an algorithm that uses a block cipher to provide an information service such as confidentiality or authenticity. The most commonly used mode of operation in recent decades is the cipher block chaining CBC. In what follows, we will see how this mode works in practice.
  \subsubsection{Initialisation vector IV}
  Like some other modes of operation, the CBC one requires not only a plaintext but also an initialization vector (denoted as IV in what follows). 
  An IV is an arbitrary number that must be generated for each execution of the encryption operation. For the decryption algorithm, the vector used must be the same, see Figure 1. The use of this vector prevents repetition in data encryption. So, it offers the benefit to make this operation more difficult for a hacker. Indeed, it deprives him to find patterns and break a cipher. The length of the initialization vector (the number of bits or bytes it contains) depends on the method of encryption. It is usually comparable to the length of the encryption key or block of the cipher in use. 
  In general, the IV (or information sufficient to determine it) possesses the following characteristics:
  \begin{itemize}
\item The IV must be available to each party of communication.
\item It does not need to be secret. So, it may be transmitted with the cipher text.
\item This vector must be unpredictable: for any given plaintext, it must not be possible to predict the IV that will be associated to it.
\end{itemize}

  \subsubsection{CBC mode characteristics}
  In cryptography, Cipher Block Chaining is a block cipher mode that provides confidentiality but not message integrity. It offers a solution to most of the problems presented by the Electronic Code Book (ECB) mode; thanks to CBC mode, the encryption will depend not only on the plaintext, but also on all preceding blocks.
 More precisely, each block of plaintext is XORed immediately with the previous cipher text block before being encrypted (\textit{i.e.}, the binary operator XOR is applied between two stated blocks). For the first block, the initialization vector acts as the previous cipher text block, see Figure 1.
   CBC mode possesses several advantages. In fact, the same plaintext is encrypted differently in the case of different initialization vectors. In addition, the encryption operation of each block depends on the preceding one, so any modification in the order of the cipher text block makes the decryption operation unrealizable. Furthermore, if a transmission error affects the encrypted data, for example the $C_{i}$ block, so only $m_{i}$ and $m_{i+1}$ blocks will be influenced by this error while the  remained blocks will be determined correctly.
Conversely, the CBC mode is characterized by two main drawbacks. The first one is that encryption is sequential (\textit{i.e.}, it cannot be parallelized). The second one is that the initial message should be padded to a multiple of the cipher block size.

\begin{figure}[h]
    \centering
 \subfigure[CBC encryption mode]{\label{fig:CBCenc}
        \includegraphics[scale=0.5]{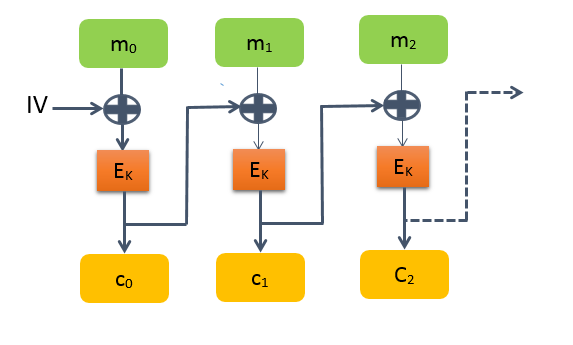}}     \subfigure[CBC decryption mode]{\includegraphics[scale=0.5]{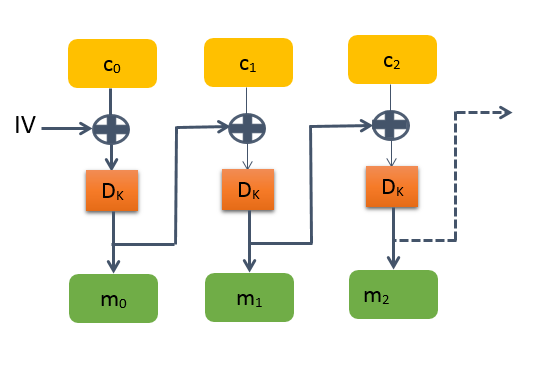}}
    \caption{CBC mode of operation}
     \label{fig:CBC}
\end{figure}
In the next section, we will summarize our previous results that have been detailed respectively in~\cite{Abdessalem2016} and~\cite {abidi2016quantitative}.

\section{Previously obtained results}
\label{section:previous results}
 
\subsection{Modeling the CBC mode as a dynamical system}
Our modeling follows a same canvas as what has been done for hash functions~\cite{bg10:ij,gb11:bc} or pseudorandom number generation~\cite{bfgw11:ij}.

Let us consider the CBC mode of operation with a keyed encryption function 
$\mathbb{E}_{\kappa}:\mathbb{B}^\mathbb{N} \rightarrow \mathbb{B}^\mathbb{N} $ 
depending on a secret key 
$\kappa$, where $\mathbb{N}$ is the size for the block cipher, and $\mathbb{D}_{\kappa}:\mathbb{B}^\mathbb{N} \rightarrow \mathbb{B}^\mathbb{N}$ 
is the associated  decryption function, which is such that 
$\forall \kappa, \mathbb{E}_{\kappa} \circ \mathbb{D}_{\kappa}$ 
is the identity function. We define  
the Cartesian product $\mathbb{X}=\mathbb{B}^\mathbb{N}\times\mathbb{S}_\mathbb{N}$, where:

\begin{itemize}
\item $\mathbb{B} = \{0,1\}$ is the set of Boolean values,
\item $\mathbb{S}_\mathbb{N} = \llbracket 0, 2^\mathbb{N}-1\rrbracket^\mathbb{N}$, the set of infinite sequences of natural integers bounded by $2^\mathbb{N}-1$, or the set of infinite $\mathbb{N}$-bits block messages,
\end{itemize}

in such a way that $\mathbb{X}$ is constituted by couples: the internal states of the mode of operation, and sequences of block messages.
Let us consider the initial function:
$$\begin{array}{cccc}
\iota:& \mathbb{S}_\mathbb{N} & \longrightarrow & \llbracket 0, 2^\mathbb{N}-1 \rrbracket \\
 & (m_i)_{i \in \mathbb{N}} & \longmapsto & m_0
\end{array}$$
that returns the first block of a (infinite) message, and the shift function:
$$\begin{array}{cccc}
 \sigma:& \mathbb{S}_\mathbb{N} & \longrightarrow & \mathcal{S}_\mathsf{N} \\
 & (m_0, m_1, m_2, ...) & \longmapsto & (m_1, m_2, m_3, ...)
\end{array}$$
that removes the first block of a message, when counting from the left. We define:
 
$$\begin{array}{cccc}
F_f:& \mathbb{B}^\mathbb{N}\times \llbracket 0, 2^\mathbb{N}-1 \rrbracket & \longrightarrow & \mathbb{B}^\mathbb{N}\\
 & (x,m) & \longmapsto & \left(x_j m^j + f(x)_j \overline{m^j}\right)_{j=1..\mathsf{N}} 
\end{array} .$$
This function returns the inputted binary vector $x$, whose $m^j$-th components $x_{m^j}$ have been replaced by $f(x)_{m^j}$, for all $j=1\dots \mathbb{N}$ such that $m^j=0$. In case where $f$ is the vectorial negation, this function will correspond to one XOR between the plaintext and the previous encrypted state.  
The CBC mode of operation can be rewritten as the following dynamical system:
\begin{equation}
\mathbb{\label{eq:sysdyn}
\left\{
\begin{array}{ll}
X^0 = & (IV,m)\\
X^{n+1} = & \left(\mathcal{E}_{\kappa} \circ F_{f_0} \left( \iota(X_1^n), X_2^n\right), \sigma (X_1^n)\right)
\end{array}
\right.}
\end{equation}

For any given $g:\llbracket 0, 2^\mathsf{N}-1 \rrbracket \times \mathds{B}^\mathsf{N} \longrightarrow \mathds{B}^\mathsf{N}$, 
we denote 
$G_g(X) = \left(g(\iota(X_1),X_2);\sigma (X_1)\right)$ (when $g = \mathbb{E}_{\kappa}\circ F_{f_0}$, 
we obtain one cipher block of the CBC, as depicted in Figure~\ref{fig:CBC}). The recurrent relation of Eq.\eqref{eq:sysdyn} can be rewritten in a condensed way, as follows.
\begin{equation}
X^{n+1} = G_{\mathcal{E}_{\kappa}\circ F_{f_0}} \left(X^n\right) .
\end{equation}
With such a rewriting, one iterate of the discrete dynamical system above corresponds exactly to one cipher block in the CBC mode of operation. Note that the second component of this system is a subshift of finite type that is related to the symbolic dynamical systems known for their relation with chaos~\cite{lind1995introduction}. We now define a distance on $\mathcal{X}$ as follows: $d((x,m);(\check{x},\check{m})) = d_e(x,\check{x})+d_m(m,\check{m})$, where:
$$\left\{\begin{array}{ll}
d_e(x,\check{x})  & = \sum_{k=1}^\mathsf{N} |x_k - \check{x}_k|  \\
&\\
d_m(m,\check{m}) & = \displaystyle{\dfrac{9}{\mathsf{N}} \sum_{k=1}^\infty \dfrac{\sum_{i=1}^\mathsf{N} \left|m^i - \check{m}^i\right|}{10^k}} .
\end{array}\right.$$
This distance has been introduced to satisfy the following requirements:
\begin{itemize}
\item The integral part between two points $X,Y$ of the 
topological space $\mathcal{X}$ corresponds to the number of binary components that are different between the two internal states $X_1$ and $Y_1$.
\item The $k$-th digit in the decimal part of the distance between $X$ and $Y$ is equal to 0 if and only if the $k$-th blocks of messages $X_2$ and $Y_2$ are equal. This desire is at the origin of the normalization factor $\dfrac{9}{\mathsf{N}}$.
\end{itemize}

\subsection{Proofs of chaos}
As mentioned in Definition~\ref{def:dev},  a function $f$ is  \emph{chaotic} on $(\mathcal{X},\tau)$ if $f$ is regular and
topologically transitive. 
We have began in~\cite{Abdessalem2016} by stating some propositions that are primarily required in order to proof the chaotic behavior of the CBC mode of operation.

\begin{proposition}
\label{prop:transitivity}
Let $g=\mathcal{E}_{\kappa} \circ F_{f_0}$, where $\mathcal{E}_{\kappa}$ 
is a given keyed block cipher and 
$f_0:\mathds{B}^\mathsf{N} \longrightarrow \mathbb{B}^\mathbb{N}$, $(x_1,\dots,x_{\mathbb{N}}) \longmapsto (\overline{x_1},\dots,\overline{x_\mathsf{N}})$ is the Boolean vectorial negation.
We consider the directed graph $\mathcal{G}_g$, where:

\begin{itemize}
\item vertices are all the $\mathbb{N}$-bit words.
\item there is an edge $m \in \llbracket 0, 2^{\mathbb{N}}-1 \rrbracket$ from $x$ to $\check{x}$ if and only if $g(m,x)=\check{x}$.
\end{itemize}

If $\mathbb{G}_g$ is strongly connected, then $G_g$ is strongly transitive.
\end{proposition}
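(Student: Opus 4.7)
The plan is to show that for arbitrary $X=(x,m)$ and $Y=(y,n)$ in $\mathbb{X}$ and any neighborhood $V$ of $X$, one can construct $X' \in V$ and $k \in \mathbb{N}$ such that $G_g^{\circ k}(X')=Y$. The key observation is that the message-sequence component of $X'$ is almost entirely free: only its first few blocks are constrained by the requirement $X' \in V$, while all remaining blocks can be designed so as to steer the dynamics exactly onto $Y$.

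First I would translate the topological constraint into a combinatorial one using the explicit distance $d$. Pick $K \in \mathbb{N}^\ast$ large enough that $10^{1-K}$ is smaller than the radius of some open ball around $X$ contained in $V$. Then any $X'=(x,m')$ whose message sequence $m'$ agrees with $m$ on its first $K$ blocks satisfies $d(X,X')=d_e(x,x)+d_m(m,m')<10^{1-K}$ and hence belongs to $V$. With the state component of $X'$ set to $x$ itself, the first $K$ iterates of $G_g$ starting from $X'$ depend only on $x$ and on $m_0,\dots,m_{K-1}$, and not on the as-yet-unspecified tail of $m'$. Denote by $z \in \mathbb{B}^{\mathbb{N}}$ the internal-state component of $G_g^{\circ K}(X')$; this is a well-defined vertex of $\mathcal{G}_g$ that is completely determined by $X$.

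Next, I would invoke the strong connectivity of $\mathcal{G}_g$: there exists a directed path from $z$ to $y$, whose consecutive edge labels form a finite sequence $\mu_0,\dots,\mu_{p-1}$ of message blocks such that iterating $g$ with $\mu_0,\dots,\mu_{p-1}$ transports the state from $z$ to $y$ (taking $p=0$ if $z=y$). I then assign these labels to the next $p$ blocks of $m'$, so that the internal-state component of $G_g^{\circ(K+p)}(X')$ becomes exactly $y$. Finally I define $m'_{K+p+j}=n_j$ for every $j \geq 0$, so that the shifted message sequence after $K+p$ iterations is exactly $n$. Then $G_g^{\circ(K+p)}(X')=(y,n)=Y$, while $X'$ still lies in $V$ thanks to its matching prefix.

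The main obstacle is conceptual rather than computational: realising that the finite path supplied by strong connectivity of $\mathcal{G}_g$ can be \emph{re-encoded} as a sequence of message blocks inserted into the tail of $m'$, thereby forcing the dynamics of $G_g$ to traverse the chosen path. Once this dictionary between graph paths and message blocks is made explicit, everything else is routine book-keeping on the distance $d$ and on the iterative structure of $G_g$: the prefix of $m'$ keeps $X'$ inside $V$, the middle segment produced by the path drives the internal state to $y$, and the suffix copied from $n$ takes care of the message-sequence component of the final iterate.
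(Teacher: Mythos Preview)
Your proof is correct, but note that the present paper does not actually contain a proof of this proposition: it appears in the ``Previously obtained results'' section as a statement recalled from the authors' earlier work~\cite{Abdessalem2016}, so there is no in-paper argument to compare against directly. That said, your prefix/steering/suffix construction of $m'$ is precisely the standard approach, and it mirrors the proof of the topological-mixing lemma given later in Section~\ref{topo}: there the authors use the very same decomposition of the message sequence, except that the steering segment is a single explicit block $G_g^{\circ k}((x,m))_1 \oplus \mathcal{D}_\kappa(x')$, exploiting the invertibility of $\mathcal{E}_\kappa$ and the XOR structure of CBC, rather than a multi-step path supplied by strong connectivity of $\mathcal{G}_g$. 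Your version is the more general one (it uses only the graph hypothesis actually stated in the proposition), while the paper's lemma proof is a specialisation that buys a uniform bound on the number of steering steps.
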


We have then proven that,
\begin{proposition}
\label{prop:regularity}
If $\mathcal{G}_g$ is strongly connected, then $G_g$ is regular.
\end{proposition}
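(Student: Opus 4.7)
Fix an arbitrary point $X=(x,m)\in \mathcal{X}$ and an arbitrary $\varepsilon>0$. The plan is to exhibit a periodic point $X'=(x,m')$ of $G_g$ satisfying $d(X,X')<\varepsilon$; by density of such points, regularity of $G_g$ follows.

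First, I would choose an integer $k$ with $10^{-k}<\varepsilon$. The construction will keep the first coordinate unchanged ($x':=x$), so the integer part of $d(X,X')$ is automatically zero. It then suffices to produce $m'\in \mathbb{S}_\mathbb{N}$ that agrees with $m$ on its first $k$ blocks, since the defining formula of $d_m$ immediately yields $d_m(m,m')\leq \sum_{j=k+1}^{\infty} 9\cdot 10^{-j}=10^{-k}<\varepsilon$.

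Next, I would propagate the internal state along the first $k$ prescribed blocks: set $x^{(0)}:=x$ and $x^{(i+1)}:=g(m_i,x^{(i)})$ for $0\leq i<k$. By the strong connectedness of $\mathcal{G}_g$, there exists a directed path in $\mathcal{G}_g$ from $x^{(k)}$ back to $x$, whose consecutive edge labels provide a block sequence $b_0,b_1,\dots,b_{\ell-1}\in \llbracket 0,2^{\mathbb{N}}-1\rrbracket$ (if $x^{(k)}=x$ one may still take any nontrivial cycle through $x$, which exists by strong connectedness applied to some other vertex and back). I then define $m'$ as the infinite periodic repetition of the finite pattern $(m_0,\dots,m_{k-1},b_0,\dots,b_{\ell-1})$.

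To conclude, I would verify that $(x,m')$ is a periodic point of $G_g$: each iteration of $G_g$ updates the state through $g$ and shifts the message by one block; after exactly $k+\ell$ iterations the message has shifted by one full period and therefore coincides with $m'$ again, while the state has traversed the loop $x\to x^{(1)}\to\cdots\to x^{(k)}\to\cdots\to x$ by construction. Hence $X'=(x,m')$ has period dividing $k+\ell$, and $d(X,X')<\varepsilon$ as required.

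The main obstacle is the \emph{joint} recurrence of the two coordinates: making $m'$ eventually periodic and making the state eventually return to $x$ are easy in isolation, but the two recurrences must be synchronized because the state evolution is driven by the message. This is precisely what the strongly-connected hypothesis buys us: it guarantees that whatever state $x^{(k)}$ we are driven into by the prescribed prefix, we can steer back to $x$ via a suitably chosen suffix of blocks, and this suffix can then be folded into the period of $m'$.
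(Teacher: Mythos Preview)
Your argument is correct and is precisely the canonical construction for this kind of statement: match the first $k$ blocks to get $\varepsilon$-closeness, then use strong connectedness of $\mathcal{G}_g$ to steer the internal state back to $x$, and periodize the resulting finite word. The distance estimate and the verification of periodicity are both fine (the period is $k+\ell$, possibly not minimal, but that is irrelevant for density).

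Note that the present paper does not actually prove Proposition~\ref{prop:regularity}; it is recalled from~\cite{Abdessalem2016} without argument. Your proof is exactly the one that appears in that line of work (and, more generally, in the chaotic-iterations framework these authors use), so there is nothing to contrast: you have reproduced the intended proof.

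One cosmetic remark: your parenthetical about the case $x^{(k)}=x$ (``one may still take any nontrivial cycle through $x$'') is unnecessary. If $x^{(k)}=x$ you can simply take $\ell=0$ and let $m'$ be the $k$-periodic repetition of $(m_0,\dots,m_{k-1})$; the point $(x,m')$ is then already periodic of period $k$. No extra cycle is needed, and in fact strong connectedness of a finite directed graph does not by itself guarantee a nontrivial cycle through \emph{every} vertex unless you also know there is more than one vertex.
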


According to Propositions~\ref{prop:transitivity} and~\ref{prop:regularity}, we can conclude that, depending on $g$, if the directed graph $\mathcal{G}_g$ is strongly connected, then the CBC mode of operation is chaotic according to Devaney, as established in our previous research work~\cite{Abdessalem2016}.
In this article and for illustration purpose, we have also given some examples of encryption functions making this mode a chaotic one. 

In the next section we will recall some quantitative measures of chaos that have already been proven in our previous research work.
\subsection{Quantitatives measures}
In~\cite{abidi2016quantitative}, we have respectively developed these two following propositions.
\begin{proposition}
\label{prop:sensitivity}
 The CBC mode of operation is sensible to the initial condition, and its constant of sensibility is larger than the length $\mathsf{N}$ of the block size.
 \end{proposition}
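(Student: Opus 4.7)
The plan is to unpack the sensitivity definition directly. Given any $X = (x, m) \in \mathcal{X}$ and any $\varepsilon > 0$, I need to exhibit $Y \in B(X, \varepsilon)$ and $n \in \mathbb{N}^*$ such that $d(G_g^{\circ n}(X), G_g^{\circ n}(Y)) > \mathsf{N}$. The structure of the distance is decisive here: $d_e$ is a non-negative integer bounded above by $\mathsf{N}$, and $d_m$ is bounded above by $1$, so exceeding the threshold $\mathsf{N}$ forces simultaneously that the two internal states differ in \emph{every} one of the $\mathsf{N}$ bits \emph{and} that the shifted message sequences still disagree in at least one block.

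First, I pick $k_0 \in \mathbb{N}$ large enough that any $m'$ agreeing with $m$ on all but a bounded number of late blocks satisfies $d_m(m, m') < \varepsilon$; the geometric bound $\sum_{k > k_0} 10^{-k} = \tfrac{1}{9}\,10^{-k_0}$ makes the choice $k_0 \geq \lceil \log_{10}(1/\varepsilon)\rceil + 1$ comfortable. I then set $Y = (x, m')$, where $m'_j = m_j$ for every $j$ except two designated positions $k_0$ and $k_0+1$ whose values are reserved for the construction below. Because the internal states coincide and the first $k_0$ blocks agree, the iterates $G_g^{\circ k_0}(X)$ and $G_g^{\circ k_0}(Y)$ share a common internal state $s_{k_0}$. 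The key step is that the map $m \mapsto g(m, s_{k_0}) = \mathcal{E}_\kappa(F_{f_0}(s_{k_0}, m))$ is a bijection on $\mathbb{B}^\mathsf{N}$---this is inherited from $F_{f_0}(s, \cdot)$ being bijective and $\mathcal{E}_\kappa$ being bijective, and it is the property underlying Proposition~\ref{prop:transitivity}---so there is a unique block $m'_{k_0}$ for which $g(m'_{k_0}, s_{k_0})$ equals $\overline{g(m_{k_0}, s_{k_0})}$, the bitwise complement. This forces the two internal states at time $k_0+1$ to be bitwise complementary. To secure the strict inequality, I further choose $m'_{k_0+1}$ to differ from $m_{k_0+1}$ in exactly one bit, keeping $m'_j = m_j$ for $j > k_0+1$.

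Setting $n = k_0 + 1$, the internal-state contribution to the distance is exactly $\mathsf{N}$, and the shifted message sequences disagree in what is now their first block (by a single bit), contributing $\tfrac{9}{10\,\mathsf{N}} > 0$ to $d_m$. Hence $d(G_g^{\circ n}(X), G_g^{\circ n}(Y)) \geq \mathsf{N} + \tfrac{9}{10\,\mathsf{N}} > \mathsf{N}$, and since this bound is independent of $X$ and $\varepsilon$, the sensitivity constant can be declared strictly larger than $\mathsf{N}$. The main technical obstacle---and the heart of the argument---is engineering the modified block $m'_{k_0}$ so that the next encryption step lands exactly on the bitwise complement of $X$'s next internal state; the one-step bijectivity of $g(\cdot, s_{k_0})$ hands this to us for free, whereas the argument would be substantially more delicate if one only had abstract strong connectivity of $\mathcal{G}_g$ and had to stitch together a path of some prescribed length.
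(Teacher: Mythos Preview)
The paper itself does not give a proof of this proposition; it is only recalled in Section~\ref{section:previous results} as a result established in the earlier article~\cite{abidi2016quantitative}. There is therefore no proof in the present paper against which to compare your attempt.

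On its own merits, your argument is correct. The decisive step---choosing $m'_{k_0}$ so that $g(m'_{k_0},s_{k_0})$ is the bitwise complement of $g(m_{k_0},s_{k_0})$---is justified exactly as you say: for each fixed state $s$, the map $m\mapsto \mathcal{E}_\kappa\bigl(F_{f_0}(s,m)\bigr)$ is a bijection of $\mathbb{B}^{\mathsf{N}}$, being a composition of an XOR with a fixed word and the block-cipher permutation $\mathcal{E}_\kappa$. The additional one-bit tweak at position $k_0+1$ then guarantees a strictly positive $d_m$ after the $(k_0{+}1)$-st iterate, yielding the uniform lower bound $d\bigl(G_g^{\circ n}(X),G_g^{\circ n}(Y)\bigr)=\mathsf{N}+\tfrac{9}{10\mathsf{N}}$, so that any $\delta\in\bigl(\mathsf{N},\,\mathsf{N}+\tfrac{9}{10\mathsf{N}}\bigr)$ is a valid sensitivity constant.

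Two minor points worth tightening. First, your parenthetical that this bijectivity is ``the property underlying Proposition~\ref{prop:transitivity}'' overstates things: that proposition is formulated under the weaker hypothesis that $\mathcal{G}_g$ is strongly connected; you are actually exploiting the specific CBC form of $g$, which is legitimate here since the statement concerns CBC specifically. Second, the correspondence between the zero-based block index used in the dynamics and the index $k\geqslant 1$ in the definition of $d_m$ introduces an off-by-one that your tail estimate glosses over; it does not affect the conclusion, but one extra line making the indexing explicit would remove any ambiguity.
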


 \begin{proposition}
\label{prop:expansitivity}
The CBC mode of operation is not expansive.
\end{proposition}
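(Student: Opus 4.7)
The plan is to disprove expansivity by exhibiting two distinct points of $\mathcal{X}$ whose orbits under the CBC dynamical system coincide after a single iteration, and therefore remain at the same bounded distance forever. The construction exploits the algebraic structure of the one-step map $(x,m)\mapsto \bigl(\mathcal{E}_\kappa(F_{f_0}(x,\iota(m))),\sigma(m)\bigr)$: when $f_0$ is the Boolean vectorial negation, a direct unwinding of $F_{f_0}(x,m^*)_j=x_j (m^*)^j+\overline{x_j}\,\overline{(m^*)^j}$ yields $F_{f_0}(x,m^*)_j=\overline{x_j\oplus (m^*)^j}$, that is, a bitwise XNOR between the internal state and the first message block. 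Any bit-flip on the state can therefore be cancelled by a matching bit-flip on that block, producing identical inputs to the injective block cipher $\mathcal{E}_\kappa$.

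Concretely, I would fix an arbitrary $x\in\mathbb{B}^\mathsf{N}$ and an arbitrary message $m=(m_0,m_1,m_2,\ldots)\in\mathbb{S}_\mathsf{N}$, then choose a block $a\neq m_0$ differing from $m_0$ in a single bit and set $y=x\oplus m_0\oplus a$. This guarantees $x\oplus m_0=y\oplus a$ and hence $F_{f_0}(x,m_0)=F_{f_0}(y,a)$. With $X^0=(x,m)$ and $Y^0=(y,(a,m_1,m_2,\ldots))$, the two initial points differ in both components, so $X^0\neq Y^0$. Applying the recurrence~\eqref{eq:sysdyn} once gives $X^1=\bigl(\mathcal{E}_\kappa(F_{f_0}(x,m_0)),(m_1,m_2,\ldots)\bigr)=Y^1$, and an immediate induction yields $X^n=Y^n$ for every $n\geq 1$.

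It then remains only to bound the orbit distances. Since $x$ and $y$ differ in a single bit, $d_e(x,y)=1$; since $m$ and $(a,m_1,m_2,\ldots)$ differ only at the first block and by a single bit, their contribution to $d_m$ is $\tfrac{9}{10\mathsf{N}}$. Hence $\sup_{n\geq 0}d(X^n,Y^n)=d(X^0,Y^0)=1+\tfrac{9}{10\mathsf{N}}$, a small constant independent of the iterate. Consequently no positive $\delta$ can force these two orbits to separate by more than $\delta$ at any time, which is precisely the witness required to refute expansivity. The only non-trivial step will be the XNOR reformulation of $F_{f_0}$ and verifying the cancellation $x\oplus m_0=y\oplus a$; once that identity is secured the rest is a routine calculation, and I do not foresee a deeper obstacle.
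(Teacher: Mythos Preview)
The present article does not reproduce a proof of Proposition~\ref{prop:expansitivity}; it only recalls the statement from~\cite{abidi2016quantitative} and notes that non\hyp{}expansivity ``has been established thanks to a counter example.'' Your construction is exactly such a counterexample and is almost certainly the same one the authors had in mind: exploit that $F_{f_0}(x,m_0)_j=x_j m_0^j+\overline{x_j}\,\overline{m_0^j}=\overline{x_j\oplus m_0^j}$ is a bitwise XNOR, pick $(y,a)\neq(x,m_0)$ with $x\oplus m_0=y\oplus a$, and observe that the two orbits coalesce from time~$1$ onward because $\mathcal{E}_\kappa$ receives identical inputs. That part of the argument is sound.

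The only slip is in your closing paragraph. You compute $\sup_{n\ge 0}d(X^n,Y^n)=1+\tfrac{9}{10\mathsf{N}}$ and then claim that ``no positive $\delta$ can force these two orbits to separate by more than $\delta$ at any time,'' which is literally false: any $\delta\le 1$ is already exceeded at $n=0$. What you have actually proved, and what is needed, is that $d(X^n,Y^n)=0$ for \emph{every} $n\ge 1$ while $X^0\neq Y^0$; under the convention $n>0$ used for these dynamical notions in this line of work (cf.\ Definition~\ref{sensitivity}), this single pair defeats every candidate expansivity constant simultaneously. Drop the reference to the value $1+\tfrac{9}{10\mathsf{N}}$ and conclude directly from the coalescence at $n\ge 1$.
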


To sum up, CBC mode of operation is sensible to the initial conditions but it is not expansive. Let us now investigate new original aspects of chaos of the CBC mode of operation.
 
\section{Topological mixing}
\label{topo}
The topological mixing is a strong version of transitivity.
\begin{definition} 
A discrete dynamical system is said \emph{topologically mixing} if and only if, for any couple of disjoint open set $\mathbb{U},\mathbb{V} \neq \varnothing$, there exists an integer $n_0\in \mathbb{N}$ such that, for all $n > n_0$, $f^{\circ n}(\mathbb{U}) \cap \mathbb{V} \neq
\varnothing$.
\end{definition}
\begin{proposition}
\label{prop:topological mixing}
$(\mathbb{X}, G_g)$ is topologically mixing.
\end{proposition}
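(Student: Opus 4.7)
The plan is to adapt the strategy used to prove strong transitivity (Proposition~\ref{prop:transitivity}), but now requiring that the hitting time can be taken to be \emph{any} sufficiently large integer rather than a single specific one. I would first unpack the open sets into basic neighborhoods. Pick $(x,m) \in \mathbb{U}$ and $(\check{x}, \check{m}) \in \mathbb{V}$ together with $\varepsilon_1, \varepsilon_2 \in (0,1)$ such that the open balls $B((x,m),\varepsilon_1)$ and $B((\check{x},\check{m}),\varepsilon_2)$ lie inside $\mathbb{U}$ and $\mathbb{V}$ respectively. Because the integer part of $d$ counts differing bits of the internal state, the constraint $\varepsilon_i < 1$ forces the internal-state coordinate to be rigid; choosing $k_i \in \mathbb{N}$ with $10^{-k_i} < \varepsilon_i$ then ensures that any point in the $i$-th ball must agree with its center on the first $k_i$ message blocks.

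I would then set $n_0 = k_1$ and, for every $n > n_0$, construct a point $(x, m') \in B((x,m),\varepsilon_1)$ whose $n$-th iterate lands in $B((\check{x},\check{m}),\varepsilon_2)$. The message $m'$ is designed block by block: the first $k_1$ blocks are copied from $m$ (so the starting point is in $\mathbb{U}$); the blocks at positions $n, n+1, \ldots, n+k_2-1$ are copied from $\check{m}_0, \ldots, \check{m}_{k_2-1}$ (so that, after $n$ shifts, the second coordinate matches $\check{m}$ on its first $k_2$ blocks); the blocks beyond position $n+k_2$ are arbitrary; and the intermediate blocks at positions $k_1, k_1+1, \ldots, n-1$ must be chosen so that the sequence of internal states driven by the iteration of $g$ reaches $\check{x}$ exactly at time $n$.

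The core step is therefore the existence of these intermediate blocks, which amounts to finding a path of length exactly $n - k_1$ in the graph $\mathcal{G}_g$ from the state $x^{(k_1)}$ (the internal state reached from $x$ after processing the first $k_1$ blocks) to the target $\check{x}$. Here I would exploit the specific form $g = \mathcal{E}_{\kappa} \circ F_{f_0}$: for fixed $x$, the map $m \mapsto F_{f_0}(x,m)$ is a bijection onto $\mathbb{B}^\mathsf{N}$, and composing with the bijection $\mathcal{E}_\kappa$ shows that $m \mapsto g(m,x)$ is also a bijection. Consequently every vertex of $\mathcal{G}_g$ has an outgoing edge to every vertex, including a self-loop at itself. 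The self-loops give aperiodicity and, combined with the standing hypothesis of strong connectedness of $\mathcal{G}_g$, they yield paths of every sufficiently large length between any two vertices; in fact, in the resulting complete digraph, paths of \emph{every} positive length exist.

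The main obstacle I anticipate is the bookkeeping: to confirm that $G_g^{\circ n}(x, m')$ really lies in $B((\check{x},\check{m}),\varepsilon_2)$ one must verify simultaneously that its internal state equals $\check{x}$ (ensured by the path construction above) and that its second coordinate agrees with $\check{m}$ on the first $k_2$ blocks (ensured by the prescription of $m'_n, \ldots, m'_{n+k_2-1}$). The indices must be tracked carefully so that the message-agreement window used to land in $\mathbb{V}$ does not collide with the path-construction window used to steer the internal state; this is precisely what forces the lower bound $n_0 = k_1$ and why $n > n_0$ strictly is required.
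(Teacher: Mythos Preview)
Your proposal is correct and follows essentially the same route as the paper: both arguments exploit that, for the CBC map $g=\mathcal{E}_\kappa\circ F_{f_0}$, the assignment $m\mapsto g(m,x)$ is a bijection of $\mathbb{B}^{\mathsf N}$, so from any internal state one can reach any other in a single step, and hence the first $k_1$ message blocks fix membership in the source ball while the later blocks can be chosen freely to steer the state. The only packaging difference is that the paper phrases this as a lemma asserting $G_g^{\circ n}(\mathcal{B})=\mathcal{X}$ for a suitable $n$ (hitting \emph{every} point exactly, not just a neighborhood), whereas you construct, for each $n>k_1$, a point landing in the target ball; your version makes the ``for all $n>n_0$'' part of mixing explicit, while the paper's version relies implicitly on the surjectivity of $G_g$ to propagate the equality $G_g^{\circ n}(\mathcal{B})=\mathcal{X}$ to all larger $n$.
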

This result is an immediate consequence of the lemma below.

\begin{lemma}
For any open ball  $\mathcal{B}=\mathcal{B}((x,m),\varepsilon)$ of $\mathcal{X}$, an index $n$ can be found such that $G_{g}^{\circ n}(\mathcal{B}) = \mathcal{X}$.
\end{lemma}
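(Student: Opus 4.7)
My plan is to identify, inside $\mathcal{B}$, a large cylinder of points whose message tails are unconstrained, and then to exploit the specific structure of $g = \mathcal{E}_\kappa \circ F_{f_0}$ so that a single additional iteration of $G_g$ maps this cylinder onto the whole of $\mathcal{X}$. First I would pick an integer $K \in \mathbb{N}^*$ with $10^{-K} < \varepsilon$. A routine estimate on the geometric tail in the definition of $d_m$ then shows that every couple $(x, m') \in \mathcal{X}$ sharing its first $K$ message blocks with $m$ satisfies $d((x,m),(x,m')) \leq 10^{-K} < \varepsilon$, so the full cylinder of such couples sits inside $\mathcal{B}$.

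The crux of the argument is a graph-theoretic observation. A direct computation gives $F_{f_0}(v, \mu) = v \oplus \overline{\mu}$ componentwise, so for any fixed state $v \in \mathbb{B}^\mathbb{N}$ the map $\mu \longmapsto g(\mu, v) = \mathcal{E}_\kappa(v \oplus \overline{\mu})$ is a composition of two bijections, hence a bijection from $\llbracket 0, 2^\mathbb{N}-1\rrbracket$ onto $\mathbb{B}^\mathbb{N}$. In other words, $\mathcal{G}_g$ is complete with self-loops: from any vertex, any prescribed target vertex is reachable in exactly one step by picking the appropriate message block. I expect this to be the decisive step; it upgrades the ``strongly connected'' hypothesis used earlier into ``one-step reachability between any two vertices'', and so bypasses the delicate discussion of path lengths, aperiodicity, or graph primitivity that strong connectedness alone would require.

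With these two ingredients in hand, the assembly is straightforward. Set $n := K+1$ and take an arbitrary target $(y, m'') \in \mathcal{X}$. Let $x^{(K)}$ denote the state reached from $x$ after $K$ iterations of $G_g$ driven by $m_1, \dots, m_K$, and use the previous paragraph to select the unique $\mu \in \llbracket 0, 2^\mathbb{N}-1\rrbracket$ with $g(\mu, x^{(K)}) = y$. Define $m' \in \mathbb{S}_\mathbb{N}$ by $m'_k = m_k$ for $1 \leq k \leq K$, $m'_{K+1} = \mu$, and $m'_{K+1+j} = m''_j$ for $j \geq 1$. By the cylinder step, $(x, m') \in \mathcal{B}$; tracing the $n$ iterations of $G_g$ yields $G_g^{\circ n}(x, m') = (y, m'')$; and since $(y, m'')$ was arbitrary, $G_g^{\circ n}(\mathcal{B}) = \mathcal{X}$, as required.
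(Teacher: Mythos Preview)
Your proof is correct and follows essentially the same route as the paper: both arguments isolate the cylinder of points in $\mathcal{B}$ sharing the first $K$ (resp.\ $k$) message blocks with $(x,m)$, then use one further iteration with a specially chosen block to hit an arbitrary target state, appending the target message as the tail. The only cosmetic difference is that you invoke bijectivity of $\mu \mapsto \mathcal{E}_\kappa(v \oplus \overline{\mu})$ to guarantee the required block exists, whereas the paper writes it down explicitly as $\check m_{k+1} = G_g^{\circ k}((x,m))_1 \oplus \mathcal{D}_\kappa(x')$.
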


\begin{proof}
Let $\mathcal{B}$ be an open ball whose radius $\varepsilon$ can be considered as strictly lower than 1, $\varepsilon < 1$.

All the elements of $\mathcal{B}$ have the same state $x$, and due to the definition of the chosen metric, they are such that an integer $k \left(=-\lfloor \log_{10}(\varepsilon)\rfloor\right)$ satisfies:
\begin{itemize}
\item all the strategies of $\mathcal{B}$ have the same $k$ first block messages,
\item after the index $k$, all values are possible.
\end{itemize}

Then, after $k$ iterations, the new state of the system is $G_{g}^{\circ k}(x,m)_1$ and all the strategies are possibles (any point of the form $(G_{g}^{\circ k}(x,m)_1,\textrm{\^{m}})$, with any $\textrm{\^{m}} \in \mathcal{S}$, is reachable from $\mathcal{B}$).

Let $(x',m') \in \mathcal{X}$. We will prove that it can be reached by starting from $\mathcal{B}$.
Indeed, let us consider the point $(\check{x},\check{m})$ of $\mathcal{B}$ defined by:
\begin{itemize}
    \item $\check{x} = x$
    \item $\forall i \leqslant k, \check{m}_i = m_i$,
    \item $\check{m}_{k+1} = G_g^{\circ k}((x,m))_1 \oplus \mathbb{D}_{\kappa}(x')$,
    \item $\forall i \geqslant k+2, \check{m}_i =m_{i-k-2}'$.
\end{itemize}
This latter is such that:\\

\begin{tabular}{ll}
$G_g^{\circ k+1}((\check{x},\check{m}))_1$ & = $G_g(G_g^{\circ k}((x,m))_1, \check{m}_{k+1})_1$\\
& = $G_g(G_g^{\circ k}((x,m))_1, G_g^{\circ k}((x,m))_1 \oplus \mathcal{D}_{\kappa}(x'))_1$\\
& = $\mathcal{E}_{\kappa}\left(G_g^{\circ k}((x,m))_1 \oplus \left( G_g^{\circ k}((x,m))_1 \oplus \mathcal{D}_{\kappa}(x')\right) \right)$\\
& =
$\mathcal{E}_{\kappa}\left(\left(G_g^{\circ k}((x,m))_1 \oplus  G_g^{\circ k}((x,m))_1\right) \oplus \mathcal{D}_{\kappa}(x') \right)$\\ 
& = $\mathcal{E}_{\kappa}\left(\mathcal{D}_{\kappa}(x')\right)$\\
& =$x'$\\
\end{tabular}

\noindent and
$G_g^{\circ k+1}(\check{x},\check{m})_2 = m'$. 

This shows that $(x',m')$ has been reached starting from $\mathcal{B}$. This fact concludes the proof of the lemma and of the proposition claimed previously.
\end{proof}

\section{Topological entropy}
 \label{sec:entro}
Another important tool to measure the chaotic behavior of a dynamical system is the topological entropy, which is defined only for compact topological spaces. Before studying the entropy of CBC mode of operation, we must then check that $(\mathcal{X},\ d)$ is compact.

\subsection{Compactness study}
In this section, we will prove that $(\mathcal{X}, d)$ is a compact topological space, in order to study its topological entropy later. Firstly, as $(\mathcal{X}, d)$ is a metric space, it is separated.
 It is however possible to give a direct proof of this result:

\begin{proposition}
$(\mathcal{X}, d)$ is a separated space.
\end{proposition}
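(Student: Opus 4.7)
The plan is to verify the Hausdorff axiom directly from the explicit distance $d((x,m),(\check{x},\check{m})) = d_e(x,\check{x}) + d_m(m,\check{m})$. I would fix two distinct points $X=(x,m)$ and $Y=(\check{x},\check{m})$ of $\mathcal{X}$ and first establish that $\delta := d(X,Y) > 0$; then the open balls $\mathcal{B}(X,\delta/2)$ and $\mathcal{B}(Y,\delta/2)$ are disjoint open neighborhoods of $X$ and $Y$ by the triangle inequality, which is the standard way a metric induces a separated topology.

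The substantive content is the strict positivity of $d$ on distinct pairs, which I would settle by a short case analysis. If the internal states differ, $x\neq \check{x}$, then at least one binary coordinate disagrees, so $d_e(x,\check{x})\geqslant 1$ and already $d(X,Y)\geqslant 1$. Otherwise $x=\check{x}$ but the message sequences differ, so there exists a smallest index $k_0$ at which the blocks $m_{k_0}$ and $\check{m}_{k_0}$ are not equal; the corresponding term $\tfrac{9}{\mathsf{N}}\cdot 10^{-k_0}\sum_i |m_{k_0}^i - \check{m}_{k_0}^i|$ in the series defining $d_m$ is then strictly positive, and all other terms being nonnegative, this term lower bounds $d_m(m,\check{m})$, whence $d(X,Y)>0$.

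The main, though minor, obstacle is to confirm that $d$ is genuinely a metric on $\mathcal{X}$, since the argument above also tacitly invokes the triangle inequality to disjoin the two balls. This is a routine check: $d_e$ is the Hamming distance on $\mathbb{B}^{\mathsf{N}}$, $d_m$ is easily seen to be a metric on $\mathbb{S}_{\mathsf{N}}$ (the normalizing factor $\tfrac{9}{\mathsf{N}}$ only rescales, and the decimal weights $10^{-k}$ ensure absolute convergence), and the sum of two metrics on the respective factors is a metric on the product. Once this is in place, the separation property follows immediately from the case analysis and the triangle inequality, completing the proof.
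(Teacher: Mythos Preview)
Your proposal is correct and follows essentially the same case analysis as the paper: distinguish whether the internal states differ (forcing distance $\geqslant 1$) or only the message sequences differ at some first index $k$ (forcing a strictly positive $d_m$), and then exhibit disjoint balls. The only cosmetic difference is that the paper names explicit radii ($\tfrac{1}{2}$ in the first case, $10^{-(k+1)}$ in the second) rather than using the generic $\delta/2$ balls via the triangle inequality; indeed, the paper remarks just before the proposition that separation is immediate from $d$ being a metric, which is precisely the route you take.
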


\begin{proof}
Let $(x,w) \neq (\textrm{\^{x}},\textrm{\^{w}})$ two points of $\mathbb{X}$.
\begin{enumerate}
    \item If $x \neq \textrm{\^{x}}$, then the intersection between the two balls  $\mathbb{B}\left((x,w),\frac{1}{2}\right)$ and $\mathbb{B}\left((\textrm{\^{x}},\textrm{\^{w}}), \frac{1}{2}\right)$  is empty.
    \item Else, it exists $k\in\mathds{N}$ such that $w_k \neq \textrm{\^{w}}_k$, then the balls $\mathbb{B}\left((x,w),10^{-(k+1)}\right)$ and $\mathbb{B}\left((\textrm{\^{x}},\textrm{\^{w}}), 10^{-(k+1)}\right)$ can be chosen.
\end{enumerate}

\end{proof}

Let us now prove the compactness of the metric space $(\mathcal{X}, d)$ by using the sequential characterization of compactness.

\begin{proposition}
$(\mathbb{X}, d)$ is a compact space.
\end{proposition}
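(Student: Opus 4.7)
The plan is to verify compactness via the sequential characterization of compact metric spaces: from any sequence $(X^{(n)})_{n \in \mathbb{N}} = (x^{(n)}, m^{(n)})_{n \in \mathbb{N}}$ of points in $\mathcal{X}$, I would extract a subsequence converging with respect to $d$. The key structural observation is that $\mathcal{X}$ is set-theoretically a countable product of finite sets, namely $\mathbb{B}^\mathsf{N}$ for the state coordinate and one copy of $\llbracket 0, 2^\mathsf{N} - 1 \rrbracket$ for each block of the message sequence, so compactness should reduce to a Tychonoff-style diagonal extraction, and the metric $d$ should be seen to induce exactly the product topology on this product.

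First, since $\mathbb{B}^\mathsf{N}$ is finite (with $2^\mathsf{N}$ elements), the pigeonhole principle yields an infinite set of indices on which the state coordinate $x^{(n)}$ is constantly equal to some $x^\star \in \mathbb{B}^\mathsf{N}$. After relabeling I may assume $x^{(n)} = x^\star$ for every $n$, which forces $d_e\bigl(x^{(n)}, x^\star\bigr) = 0$ and kills the integer part of $d$ uniformly.

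Next, I would stabilize the message blocks one at a time by a standard diagonal procedure. The $0$-th blocks $m^{(n)}_0$ live in the finite set $\llbracket 0, 2^\mathsf{N} - 1 \rrbracket$, so a first extraction makes them constantly equal to some $m^\star_0$; from the resulting subsequence a further extraction stabilizes the block of index $1$ at some $m^\star_1$, and so on. Taking the diagonal of these nested extractions produces a single extraction $\psi$ with the property that, for every fixed $k$ and for all $n \geqslant k$, the first $k+1$ blocks of $m^{(\psi(n))}$ coincide with $(m^\star_0, \dots, m^\star_k)$. To conclude I would bound the distance: since $d_e$ already vanishes, and since every inner sum indexed by $k \leqslant n+1$ vanishes in the series defining $d_m\bigl(m^{(\psi(n))}, m^\star\bigr)$, only a tail of a convergent geometric-type series remains, which tends to $0$ as $n \to \infty$. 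Hence $X^{(\psi(n))} \to (x^\star, m^\star)$ in $(\mathcal{X}, d)$.

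The main obstacle is purely bookkeeping: one must order the diagonal extraction so that the $k$-th block is genuinely frozen from index $n = k$ onward, which is precisely what is needed to bound the remaining tail of $d_m$ uniformly. Once this ordering is chosen correctly, the proof has no real analytic content beyond the elementary fact that the tail of $\sum_{k} 10^{-k}$ tends to zero, and compactness follows.
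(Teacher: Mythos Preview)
Your proposal is correct and follows essentially the same approach as the paper: both use the sequential characterization of compactness, first stabilize the state coordinate by pigeonhole on the finite set $\mathbb{B}^\mathsf{N}$, then perform successive extractions to freeze the message blocks one at a time, and finally take a diagonal subsequence converging to the point built from the stabilized values. The only cosmetic differences are that the paper always selects the \emph{lowest} recurring value at each step and records the first-occurrence index $n_k$, whereas you invoke the standard Cantor diagonal $\psi$ and are more explicit about bounding the geometric tail of $d_m$; neither changes the substance of the argument.
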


\begin{proof}
Let $X=\left((x_n, m_n)\right)_{n \in \mathbb{N}}$ be a sequence of $\mathbb{X}$.

There is at least one Boolean vector that appear in infinite number of times in the first components of this sequence, as $\mathbb{B}^\mathbb{N}$ is finite. Let $\tilde{x}$ the lowest of them and
$I$ the (infinite) subsequence of $X$ constituted by all the block messages having their first component equal to $\tilde{x}$.

The first block messages $(w_n)_0$ of the sequences $w_n \in \llbracket 0, 2^\mathsf{N}-1 \rrbracket^\mathds{N}$ (that are the second components of each couple in the infinite sequence $I_0$) all belong in the finite set $\llbracket 0, 2^\mathbb{N}-1 \rrbracket$, and so at least one word of this finite set appears an infinite number of times in $\left((w_n)_0\right)_{n \in \mathds{N}}$. Let $\omega_0 \in \llbracket 0, 2^\mathsf{N}-1 \rrbracket$ be the lowest value occurring an infinite number of times in $I$, and $n_0$ the index of its first occurrence, such that $x_{n_0} = \tilde{x}$, $\left(w_{n_0}\right)_0 = \omega_0$.

Similarly, the subsequence $I_1$ of $X$ constituted by the block messages $(x_n, w_n)$ such that $x_n=\tilde{x}$ and $\left(w_{n}\right)_0 = \omega_0$ is infinite, while all the $\left(w_{n}\right)_1$ belong in $\llbracket 0, 2^\mathbb{N}-1 \rrbracket$. So at least one element of $\llbracket 0, 2^\mathbb{N}-1 \rrbracket$ appears an infinite number of times in the second block messages of the second components $\left(w_{n}\right)_1$ of $I_1$. Let $\omega_1$ be the lowest value in $\llbracket 0, 2^\mathsf{N}-1 \rrbracket$ occurring an infinite number of times at this position, and $n_1$ the index in $X$ of its first occurrence.

We can define again a subsequence $I_2 = (x_n, w_n)$ of $X$ such that $\forall n, x_n = \tilde{x}$, $(w_n)_0=\omega_0$, and $(w_n)_1=\omega_1$, and a similar argument leads to the definition of $\omega_2$, the lowest value in $\llbracket 0, 2^\mathbb{N}-1 \rrbracket$ appearing an infinite number of times in the third block messages of the sequences $w_n \in \llbracket 0, 2^\mathsf{N}-1 \rrbracket^\mathbb{N}$ of $I_3$. This process can be continued infinitely.

Let us finally define the point $l=\left(\tilde{x}, \left(w_{n_k}\right)_k\right)$ of $\mathcal{X}$; the subsequence $\left(x_{n_k}, w_{n_k}\right)$ of $X$ converges to $l$. As for all sequences in $\mathcal{X}$ we can extract a subsequence that converges in $\mathcal{X}$, we can conclude to the compactness of $\mathcal{X}$.
\end{proof}

\subsection{Topological entropy}

Let $(X, d)$ be a compact metric space and $f: X \rightarrow X$ be a continuous map. For each natural number $n$, a new metric $d_n$ is defined on $X$ by

$$d_n(x,y)=\max\{d(f^{\circ i}(x),f^{\circ i}(y)): 0\leq i<n\}.$$

Given any $\varepsilon > 0$ and $n \geqslant 1$, two points of $X$ are $\varepsilon$-close with respect to this new metric if their first $n$ iterates are $\varepsilon$-close (according to $d$).

This metric allows one to distinguish in a neighborhood of an orbit the points that move away from each other during the iteration from the points that travel together. A subset $E$ of $X$ is said to be $(n, \varepsilon)$-separated if each pair of distinct points of $E$ is at least $\varepsilon$ apart in the metric $d_n$.

\begin{definition}
Let $H(n, \varepsilon)$ be the maximum cardinality of a $(n, \varepsilon)$-separated set, the \emph{topological entropy} of the map $f$ is defined by (see \textit{e.g.},~\cite{Adler65} or~\cite{Bowen})
$$h(f)=\lim_{\epsilon\to 0} \left(\limsup_{n\to \infty} \frac{1}{n}\log H(n,\varepsilon)\right). $$
\end{definition}

We have the result,
\begin{theorem}
Entropy of $(\mathcal{X},G_g)$ is infinite.
\end{theorem}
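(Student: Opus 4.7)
My approach is to lower-bound $H(n,\varepsilon)$ by exhibiting, for each prescribed resolution $\varepsilon$, a large $(n,\varepsilon)$-separated subset of $\mathcal{X}$, exploiting the shift $\sigma$ built into the second component of $G_g$. Fix $\varepsilon>0$ and pick an integer $k=k(\varepsilon)\in\mathbb{N}^*$ with $\frac{9}{\mathsf{N}\cdot 10^k}>\varepsilon$; such a $k$ exists and tends to $+\infty$ as $\varepsilon\to 0$. Fix any state $x_0\in\mathbb{B}^\mathsf{N}$ and consider the family
\[
E_{n,\varepsilon}=\bigl\{(x_0,m)\in\mathcal{X}\,:\,m_{n+k+1}=m_{n+k+2}=\cdots=0\bigr\},
\]
namely the set of points whose messages are completely free in the first $n+k$ blocks and zero thereafter; its cardinality equals $2^{\mathsf{N}(n+k)}$.

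I would then verify that $E_{n,\varepsilon}$ is $(n,\varepsilon)$-separated. Let $(x_0,m)\neq(x_0,\check m)$ be two points of $E_{n,\varepsilon}$ and let $r\in\{1,\ldots,n+k\}$ be the smallest index such that $m_r\neq \check m_r$. Setting $j=\max(0,r-k)$, one has $0\leq j\leq n-1$; the shifted messages $\sigma^j m$ and $\sigma^j\check m$ coincide on their first $r-j-1\leq k-1$ blocks and first disagree on the block of index $r-j\leq k$. The definition of $d_m$ and the choice of $k$ then give
\[
d\bigl(G_g^{\circ j}(x_0,m),\,G_g^{\circ j}(x_0,\check m)\bigr)\;\geq\;d_m(\sigma^j m,\sigma^j\check m)\;\geq\;\frac{9}{\mathsf{N}\cdot 10^{r-j}}\;>\;\varepsilon,
\]
so the pair is $(n,\varepsilon)$-separated, and consequently $H(n,\varepsilon)\geq 2^{\mathsf{N}(n+k(\varepsilon))}$.

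The remaining, and delicate, step is to push this count far enough to force $h(G_g)=+\infty$. I expect this to be the main obstacle: the naive count above only yields $\limsup_n \frac{1}{n}\log H(n,\varepsilon)\geq \mathsf{N}\log 2\cdot (1+k(\varepsilon)/n)\to \mathsf{N}\log 2$, and the dependence on $k(\varepsilon)$ is lost once $n\to\infty$ is taken first, because each iterate of $G_g$ exposes only one fresh block of $\mathsf{N}$ bits. To obtain an unbounded entropy rate one must refine the construction so that the exponent itself grows with $k(\varepsilon)$. A plausible route is to allow, at each iteration, independent perturbations at all $k(\varepsilon)$ resolution scales below $\varepsilon$ simultaneously: for instance, code a distinct single-bit perturbation into each of the first $k(\varepsilon)$ forthcoming blocks of the message at every time step, organizing these perturbations so that the different scales never interact destructively at any iteration $j\in\{0,\dots,n-1\}$. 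If this can be arranged, the separation certificate accumulates across scales, the base of the exponential in $n$ inherits a multiplicative factor depending on $k(\varepsilon)$, and $\limsup_n \frac{1}{n}\log H(n,\varepsilon)\to+\infty$ as $\varepsilon\to 0$, delivering the claim.
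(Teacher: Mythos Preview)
Your separated-set construction is essentially correct (there is a harmless off-by-one: when $r=n+k$ you get $j=n$, outside the range $0\le j<n$; shrinking the free window to $n+k-1$ blocks fixes this without affecting the asymptotics). You are also right that it yields only
\[
\limsup_{n\to\infty}\frac{1}{n}\log H(n,\varepsilon)\ \ge\ \mathsf{N}\log 2,
\]
independently of $\varepsilon$, and that the $k(\varepsilon)$ dependence disappears in the $n\to\infty$ limit.

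The refinement you sketch cannot be made to work, and the reason is structural rather than technical: the claimed theorem is false. The projection $\pi(x,m)=m$ satisfies $\pi\circ G_g=\sigma\circ\pi$, so the full shift on $2^{\mathsf{N}}$ symbols is a factor of $G_g$, giving $h(G_g)\ge h(\sigma)=\mathsf{N}\log 2$. Each fibre $\pi^{-1}(m)=\mathbb{B}^{\mathsf{N}}\times\{m\}$ has exactly $2^{\mathsf{N}}$ points, so Bowen's inequality $h(G_g)\le h(\sigma)+\sup_m h\bigl(G_g,\pi^{-1}(m)\bigr)$ forces equality, since the Bowen entropy of a finite set is zero. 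Hence $h(G_g)=\mathsf{N}\log 2<\infty$. Equivalently, $d$ induces the product topology on $\mathcal{X}$, and topological entropy is metric-independent; each iterate of $G_g$ reveals exactly one new block of $\mathsf{N}$ bits, so no ``multi-scale packing'' can push the exponential growth rate beyond $\mathsf{N}\log 2$.

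For comparison, the paper's argument proceeds quite differently: it notes that points with distinct $\mathbb{B}^{\mathsf{N}}$-components are at distance $\ge 1$, then invokes the infinite cardinality $c$ of $\mathcal{S}_{\mathsf{N}}$ to assert $H(n,1)\ge e^{n^2}$ for all $n$. That inference is a non sequitur: in a compact metric space every $(n,\varepsilon)$-separated set is finite, and its size is governed by the dynamics, not by the raw cardinality of the phase space. Your instinct that the bound is stuck at $\mathsf{N}\log 2$ is the correct diagnosis; it is the statement, not your method, that is defective.
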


\begin{proof}
Let $\textrm{x}, \textrm{\v{x}}\in \mathbb{B}^\mathbb{N}$ such that $\exists i_0 \in \llbracket 1, N \rrbracket, \textrm{x}_{i_0} \neq \textrm{\v{x}}_{i_0}$. Then, $\forall \textrm{w}, \textrm{\v{w}} \in \mathbb{S}_\mathbb{N}$,
$$d((\textrm{x},\textrm{w});(\textrm{\v{x}},\textrm{\v{w}})) \geqslant 1$$
But the cardinal $c$ of $\mathcal{S}_\mathsf{N}$ is infinite, then $\forall n \in \mathbb{N}, c >e^{n^2}$.

So for all $n \in \mathbb{N}$, the maximal number $H(n,1)$ of $(n,1)-$separated points is greater than or equal to $e^{n^2}$, and then
$$h_{top}(G_g,1) = \overline{lim} \frac{1}{n} log \left( H(n,1)\right) > \overline{lim} \frac{1}{n} log \left( e^{n^2} \right) = \overline{lim} ~(n) = + \infty.$$

\noindent But $h_{top}(G_g,\varepsilon)$ is an increasing function when  $\varepsilon$ is decreasing, then

$$h_{top} \left( G_g \right) = \lim_{\varepsilon \rightarrow 0} h_{top}(G_g,\varepsilon) > h_{top}(G_g,1) = + \infty,$$
\noindent which concludes the evaluation of the topological entropy of $G_g$.
\end{proof}

\section{Conclusion and future work}

In this article, we have deepened the topological study for the CBC mode of operation. Indeed, we have regarded if this tool possesses the property of topological mixing. Additionally, other quantitative evaluations have been performed, and the level of topological entropy has been evaluated too.
All of these properties lead to a complete unpredictable behavior for some CBC modes of operation.

In future work, we will investigate the chaotic behavior of other modes of operation. Additionally, we will see how can we exploit these proofs in order to enrich the field of cryptography. We will more specifically focus on the implementation of such modes in order to prevent them from side channel attacks due to their chaos properties.




\bibliographystyle{unsrt}
\bibliography{biblio}
\end{document}